\begin{document}
\title{Security Features of an Asymmetric Cryptosystem based on the Diophantine Equation Hard Problem and Integer Factorization Problem}

\author{M.R.K.Ariffin \inst{1,}\inst{2, [a]} \and M.A.Asbullah\inst{1,}\inst{2, [b]}  \and  N.A.Abu\inst{1,}\inst{3, [c]}}
\institute{Al-Kindi Cryptography Research Laboratory, Institute
for Mathematical Research, Universiti Putra Malaysia (UPM),
Selangor, Malaysia \and Department of Mathematics, Faculty of
Science, Universiti Putra Malaysia (UPM), Selangor, Malaysia \and
Faculty of Information Technology and Communication, Unitversiti
Teknikal Malaysia (UTeM), Melaka, Malaysia
\email{[a]rezal@math.upm.edu.my,}\email{[b]ma$\_$asyraf@putra.upm.edu.my,}\email{[c]nura@utem.edu.my}}

\maketitle

\begin{abstract}
\noindent The Diophantine Equation Hard Problem (DEHP) is a
potential cryptographic problem on the Diophantine equation
$U=\sum\limits_{i=1}^n {V_i x_{i}}$. A proper implementation of
DEHP would render an attacker to search for private parameters
amongst the exponentially many solutions. However, an improper
implementation would provide an attacker exponentially many
choices to solve the DEHP. The AA\,$_{\beta}$-cryptosystem is an
asymmetric cryptographic scheme that utilizes this concept
together with the factorization problem of two large primes and is
implemented only by using the multiplication operation for both
encryption and decryption. With this simple mathematical
structure, it would have low computational requirements and would
enable communication devices with low computing power to deploy
secure communication procedures efficiently.
\end{abstract}

\begin{keywords}{Diophantine equation hard problem (DEHP), integer factorization problem, asymmetric cryptography, passive adversary
attack} \end{keywords}


\let\thefootnote\relax\footnotetext{Supported by Fundamental Research Grant Scheme $\#5523934$,
Ministry of Higher Education, MALAYSIA.}

\section{Introduction}
The discrete log problem (DLP) and the elliptic curve discrete log
problem (ECDLP) has been the source of security for cryptographic
schemes such as the Diffie Hellman key exchange procedure,
El-Gamal cryptosystem and elliptic curve cryptosystem (ECC)
respectively \cite{6}, \cite{10}. As for the world renowned RSA
cryptosystem, the inability to find the $e$-th root of the
ciphertext C modulo N from the congruence relation $C\equiv M^e
(\textrm{mod N})$ coupled with the inability to factor $N=pq$ for
large primes $p$ and $q$ is its fundamental source of security
\cite{11}. Recently, suggestions have been made that the ECC is
able to produce the same level of security as the RSA with shorter
key length. Thus, ECC should be the preferred asymmetric
cryptosystem when compared to RSA \cite{16}. Hence, the notion
``cryptographic efficiency" is conjured. That is, to produce an
asymmetric cryptographic scheme that could produce security
equivalent to a certain key length of the traditional RSA but
utilizing shorter keys. However, in certain situations where a
large block needs to be encrypted, RSA is the better option than
ECC because ECC would need more computational effort to undergo
such a task \cite{14}. Thus, adding another characteristic toward
the notion of ``cryptographic efficiency" which is it must be less
``computational intensive". As such, in order to design a
state-of-the-art public key mechanism, the above two
characteristics must be adhered to apart from other well known
security issues. In 1998 the cryptographic scheme known as NTRU
was proposed with better "cryptographic efficiency" relative to
RSA and ECC \cite{9}.  Much effort has been done to push NTRU to
the forefront \cite{8}.

The cryptographic scheme in this paper is based on what is defined
as the Diophantine Equation Hard Problem (DEHP). It is coupled
together with the well known integer factorization problem of two
large primes. The DEHP is a new form of cryptographic problem
based on the Diophantine equation of the form
$U=\sum\limits_{i=1}^n {V_i x_{i}}$. The authors propose that the
DEHP as outlined in this paper is also another cryptographic
problem that has secure cryptographic qualities coupled with the
above described ``cryptographic efficiency" qualities.

The layout of this paper is as follows. In Section 2, the
Diophantine Equation Hard Problem (DEHP) will be described. The
mechanism of the AA\,$_{\beta}$-cryptosystem will be detailed in
Section 3. Continuing in Section 4, will be discussion on the
security features of this cryptosystem. In Section 5 lattice based
attacks on the scheme is discussed. Section 6 will be devoted in
discussing the consequences of improper design utilizing the DEHP.
That is, the possibility of succumbing to a passive adversary
attack. The underlying principle and reduction proofs regarding
the intractability of the scheme is proposed in Section 7. A
numerical example of the scheme as well as an illustration of the
DEHP will also be given in this section. Finally, we conclude the
paper by comparing ``cryptographic efficiency" characteristics
against RSA,ECC and NTRU schemes in Section 8.

\section{The Diophantine equation hard problem (DEHP)}
The DEHP is based upon the linear diophantine equation which is of
the form $U=\sum\limits_{i=1}^n {V_i x_{i}}$. The following
definitions would give a precise idea regarding the DEHP.

\begin{definition}
Let $U=\sum\limits_{i=1}^n {V_i x_{i}^{*}}$ where the integers $U$
and $\left\{ {V_i } \right\}_{i = 1}^n $ are known. We define the
sequence of integers $ \left\{ {x_{i}^{*}}\right\}_{i = 1}^n $ as
the preferred integers used to obtain $U$. The sequence $ \left\{
{x_{i}^{*}}\right\}_{i = 1}^n $ are particular elements from the
set of solutions of $U=\sum\limits_{i=1}^n {V_i x_{i}^{*}}$ that
contains infinitely many elements. The problem to determine the
sequence $ \left\{ {x_{i}^{*}}\right\}_{i = 1}^n $ is known as the
DEHP.
\end{definition}

\begin{definition}
From Definition 1, for $n=2, V_{1}=1$ and $V_{2}=1$ the DEHP is
known as the AA\,$_{\beta}$-DEHP-2 (see Section 7).
\end{definition}

\begin{definition}
The Diophantine equation given by $U=\sum\limits_{i=1}^n {V_i
x_{i}^{*}}$ is defined to be \textit{prf}-solved when the sequence
of integers $ \left\{ {x_{i}^{*}}\right\}_{i = 1}^n $ are found in
order to obtain $U$. The DEHP or the AA\,$_{\beta}$-DEHP-2 is
solved when $U$ is \textit{prf}-solved.
\end{definition}

\begin{example}
Let $x_{1}=6143959510671614040, x_{2}=6143959507200090613$ be the
preferred solutions for the equation
$12287919017871704653=x_{1}+x_{2}$ where $x_1$ and $x_{2}$ are
$2n$-bits long (i.e. this example $n=32$). An attacker would be
faced with the AA\,$_{\beta}$-DEHP-2 (see Section 7) of
determining the preferred integer $x_{1}=t$ in order to determine
the remaining preferred integer $x_{2}=12287919017871704653-t$
that form the \textit{prf}-solution set for the above Diophantine
equation. Since it is known that $x_{1}$ is 64-bits long, the
possible values of $t$ resides within the interval $(2^{63},
2^{64}-1)$. In other words, there are $2^{64}$ possible values
that $x_{1}$ might be.
\end{example}

\section{The AA\,$_{\beta}$-Cryptosystem}
\noindent We will now define parameters needed for the renewed
$AA_\beta$-cryptosystem. The communication model is between two
parties A (Along) and B (Busu).

\begin{definition}
The ephemeral secret keys for Along are three integers. The
integers $a_{1}, a_{2}$ and $a_{3}$ are $2n$-bits long. The
relation between the integers is:
\begin{equation}
a_{1}+a_{2}\equiv 0 (\textrm{mod }a_{1}-a_{2})
\end{equation}
and
\begin{equation}
a_{2}+a_{3}\equiv v (\textrm{mod }a_{1}-a_{2})
\end{equation}
where $v$ is $0.8125n$-bits long.
\end{definition}

\begin{definition}
Let $p$ and $q$ be two prime numbers of $n$-bit length. Along's
public keys are given by
\begin{equation}
e_{A1}=a_{1} + a_{2}=pq
\end{equation}
and
\begin{equation}
e_{A2}=a_{1} + a_{3}
\end{equation}
\end{definition}

\begin{definition}
Along's private key is given by
\begin{equation}
d_{A1}=a_{1}-a_{2}=p
\end{equation}
\begin{equation}
d_{A2}=v
\end{equation}
\end{definition}

\begin{definition}
Busu will generate two ephemeral session keys: $k_{1}$ and
$k_{2}$. The keys $k_{1}$ and $k_{2}$ are $\frac{n}{6}$-bits long.
\end{definition}

\begin{definition}
The message that Busu will relay to Along is a
$(\frac{4n}{5})$-bit integer $m$.
\end{definition}

\begin{definition}
Busu will produce the following ciphertext:
\begin{equation}
C=k_{1}e_{A1}+k_{2}e_{A2}+m
\end{equation}
\end{definition}

\begin{proposition}
$(C(\textrm{mod } d_{A1}))(\textrm{mod } d_{A2})=m$.
\end{proposition}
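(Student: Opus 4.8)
The plan is to prove the decryption identity by carrying out the two modular reductions in sequence and showing that each public-key contribution collapses in a controlled way. First I would reduce the ciphertext $C = k_{1}e_{A1}+k_{2}e_{A2}+m$ modulo $d_{A1}=a_{1}-a_{2}$, and then reduce the resulting integer modulo $d_{A2}=v$. The whole argument rests on two congruence facts: that $e_{A1}$ vanishes modulo $d_{A1}$, and that $e_{A2}$ reduces to exactly $v$ modulo $d_{A1}$.

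For the first reduction I would invoke equation (1), namely $a_{1}+a_{2}\equiv 0\ (\textrm{mod }a_{1}-a_{2})$, to get $e_{A1}=a_{1}+a_{2}\equiv 0\ (\textrm{mod }d_{A1})$; equivalently, since $e_{A1}=pq$ and $d_{A1}=p$, one has $d_{A1}\mid e_{A1}$. For the second public key, the congruence $a_{1}\equiv a_{2}\ (\textrm{mod }a_{1}-a_{2})$ holds trivially because $a_{1}-a_{2}$ is itself the modulus, so
\[
e_{A2}=a_{1}+a_{3}\equiv a_{2}+a_{3}\equiv v\ (\textrm{mod }d_{A1}),
\]
where the final step is exactly equation (2). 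Substituting both into the ciphertext yields $C\equiv k_{1}\cdot 0+k_{2}v+m\equiv k_{2}v+m\ (\textrm{mod }d_{A1})$.

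To upgrade these congruences to the exact equalities the statement demands, I would do the bit-length bookkeeping. Since $k_{2}$ is $\frac{n}{6}$-bits, $v$ is $0.8125n$-bits and $m$ is $\frac{4n}{5}$-bits, the integer $k_{2}v+m$ is bounded above by roughly $2^{(1/6+0.8125)n}=2^{0.9792n}$, which for the scheme's parameters stays strictly below $p\approx 2^{n}$. Hence $C\ (\textrm{mod }d_{A1})$ equals $k_{2}v+m$ \emph{exactly}, not merely up to a multiple of $p$. Reducing once more modulo $d_{A2}=v$ annihilates the term $k_{2}v$, leaving $m\ (\textrm{mod }v)$; and because $m$ is $\frac{4n}{5}$-bits while $v$ is $0.8125n$-bits we have $m<v$, so this last residue is exactly $m$, giving $(C\ (\textrm{mod }d_{A1}))\ (\textrm{mod }d_{A2})=m$.

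I expect the congruence manipulations to be immediate; the genuine point of care, and the place where the specific parameter choices earn their keep, is the size analysis ensuring that each reduction returns the intended value rather than wrapping around. Concretely, I would confirm the two inequalities $k_{2}v+m<p$ and $m<v$ explicitly in terms of the stated bit-lengths, since the correctness of decryption hinges entirely on these bounds holding across the admissible range of $n$.
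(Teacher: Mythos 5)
Your proposal is correct and follows essentially the same route as the paper: reduce $C$ modulo $d_{A1}$ to obtain $k_{2}v+m$ (justified by $k_{2}v+m<d_{A1}$), then reduce modulo $d_{A2}=v$ to obtain $m$ (justified by $m<v$). You merely make explicit the congruence steps $e_{A1}\equiv 0$ and $e_{A2}\equiv v \pmod{d_{A1}}$ that the paper's proof leaves implicit, which is a welcome but not substantively different elaboration.
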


\begin{proof}
We begin with:
\begin{equation}
(C(\textrm{mod } d_{A1}))=k_{2}v+m
\end{equation}
because $k_{2}v+m<d_{A1}$. Then,
\begin{equation}
(k_{2}v+m(\textrm{mod } d_{A2}))=m
\end{equation}
because $m<d_{A2}$.$\Box$
\end{proof}

\subsection{The AA\,$_{\beta}$ - public key cryptography scheme}
We will now discuss the AA\,$_{\beta}$-cryptosystem. It is as
follows: the scenario is that Busu will send an encrypted message
to Along. Along will provide Busu with his public key pair
$e_{A1}$ and $e_{A2}$. Busu intends to send the integer plaintext
$P=m$ as in Definition 8. Busu will then proceed to generate the
ciphertext $C$. Then Busu transmits the ciphertext $C$ to Along.
Upon receiving the ciphertext from Busu, Along by Proposition 1,
can retrieve the integer plaintext $P=m$.

\section{Security Features}
In this section we will focus on the obvious objective of an
attacker. That is to retrieve the plaintext or the private key or
both. Discussion would begin by discussing the objective of trying
to obtain the plaintext from the ciphertext followed by the
objective to obtain the private key embedded within the public
key.

\subsection{To obtain the plaintext from the ciphertext}
As defined in Definition 9, the plaintext resides within $C$.
Thus, the attacker has to \textit{prf}-solve $C$ via the preferred
integers $k_{1}$ and $k_{2}$ the AA\,$_{\beta}$-DEHP-1 (see
Section 7) given by
\begin{equation}
C=k_{1}e_{A1}+k_{2}e_{A2}+m
\end{equation}
The ability to determine the keys $k_{1}$ or $k_{2}$ would infer
that the attacker has also the ability to determine $m$ in the
first instance.

\subsection{To obtain the private key from the public key via the Diophantine equations}
The attacker has to \textit{prf}-solve $e_{A1}$ and $e_{A2}$ via
the preferred integers $a_{1}, a_{2}$ and $a_{3}$ the
AA\,$_{\beta}$-DEHP-2 (see Section 7). In congruent with the
ability to obtain the plaintext from the ciphertext as discussed
above, the ability to determine the keys $a_{1}, a_{2}$ and
$a_{3}$ would infer that the attacker has also the ability to
determine $m$ in the first instance.

\section{Lattice based attacks}
In this section we put forward two possible attacks via lattices
and show that why such attacks will not yield any information
detrimental to the scheme.

\subsection{Attack with Coppersmith method in the univariate case}
We will reproduce Coppersmith's theorem for the benfit of the
reader.

\begin{theorem}(Coppersmith)
Let $N$ be an integer of unknown factorization, which has a
divisor $b\geq N^{\beta}$. Furthermore, let $f_{\beta}(x)$ be an
univariate, monic polynimial of degree $\delta$. Then we can find
all solutions $x_{0}$ for the equation $f_{\beta}\equiv 0
(\textrm{mod } b)$ with
$$
\mid x_{0}\mid \leq \frac{1}{2} N^{\frac{\beta^{2}}{\delta} - \epsilon}
$$
in polynomial time in $(log N, \delta, \frac{1}{\epsilon})$.
\end{theorem}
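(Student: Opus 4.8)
The plan is to follow Coppersmith's lattice-based method in the reformulation due to Howgrave-Graham. The central device is a transfer principle: if one can produce an auxiliary polynomial $g(x)$ with integer coefficients that (i) vanishes modulo $b^{m}$ at every root $x_{0}$ of $f_{\beta}$ modulo $b$, and (ii) has sufficiently small coefficients relative to $b^{m}$ once the variable is rescaled to $xX$, then $g(x_{0})$ must in fact vanish \emph{over the integers}. Concretely, if $g$ has at most $\omega$ nonzero terms and $\|g(xX)\| < b^{m}/\sqrt{\omega}$ while $|x_{0}| \le X$, then $|g(x_{0})| < b^{m}$ together with $b^{m}\mid g(x_{0})$ forces $g(x_{0})=0$ in $\mathbb{Z}$. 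Once this holds, the sought roots are ordinary integer roots of $g$ and can be extracted in polynomial time by standard factoring of a univariate polynomial over $\mathbb{Q}$.

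First I would assemble the collection of \emph{shift polynomials} that are guaranteed to vanish modulo $b^{m}$ at $x_{0}$. Since $b \mid N$ and $f_{\beta}(x_{0}) \equiv 0 \pmod b$, each product $N^{m-i} f_{\beta}(x)^{i}$ is divisible by $b^{m-i}\cdot b^{i}=b^{m}$ when evaluated at $x_{0}$. I therefore take the family $g_{i,j}(x)=x^{j} N^{m-i} f_{\beta}(x)^{i}$ for $0 \le i < m$, $0 \le j < \delta$, together with the extra shifts $h_{t}(x)=x^{t} f_{\beta}(x)^{m}$ for $0 \le t < s$, where the integers $m$ and $s$ are parameters fixed later. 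All of these evaluate to multiples of $b^{m}$ at $x_{0}$, and the monicity of $f_{\beta}$ ensures their leading coefficients are clean powers of $1$, which will make the next step triangular.

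Next I would form the lattice $L$ spanned by the coefficient vectors of the $g_{i,j}(xX)$ and $h_{t}(xX)$. With the monomials ordered by degree, this basis is triangular, so $\det L$ is the product of its diagonal entries, which are explicit powers of $N$ and of $X$. Applying LLL to $L$ yields a nonzero vector of norm at most $2^{(n-1)/4}(\det L)^{1/n}$, where $n$ is the lattice dimension; this vector is the coefficient vector of the desired small polynomial $g(xX)$. Imposing the Howgrave-Graham inequality $2^{(n-1)/4}(\det L)^{1/n} < b^{m}/\sqrt{n}$, and substituting the lower bound $b \ge N^{\beta}$ together with the computed $\det L$, converts the requirement into an upper bound on $X$ of the shape $X \le c\, N^{\beta^{2}/\delta - \epsilon}$.

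The delicate part, and the main obstacle, is the bookkeeping that turns this inequality into the sharp exponent $\beta^{2}/\delta$ with arbitrarily small slack $\epsilon$. I would optimize by taking $m \approx \beta/(\delta\epsilon)$ and $s \approx \delta m$ extra shifts, so the dimension grows like $n = O(\beta^{2}/(\delta\epsilon))$; the dominant $X$-power and the dominant $N$-power in $\det L$ then balance so that the leading term of the bound is exactly $N^{\beta^{2}/\delta}$, while the $2^{(n-1)/4}$ factor, the $\sqrt{n}$, and the finite-$m$ slack are all absorbed into the $N^{-\epsilon}$ term and the constant $\tfrac12$. Finally, because $n$ is polynomial in $\delta$ and $1/\epsilon$ and the entries of $L$ have bit-size polynomial in $\log N$, both the LLL reduction and the root extraction run in time polynomial in $(\log N, \delta, 1/\epsilon)$, yielding the stated complexity.
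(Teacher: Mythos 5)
The paper does not actually prove this theorem: it is imported verbatim as a known result (``We will reproduce Coppersmith's theorem for the benefit of the reader'') and used only to argue that the root $v$ lies outside the recoverable range, so there is no in-paper proof to compare against. Your proposal is a faithful reconstruction of the standard proof in the Howgrave-Graham reformulation (as in Coppersmith's original work and May's later exposition), and the skeleton is sound: the transfer lemma $\|g(xX)\| < b^{m}/\sqrt{\omega}$ plus $b^{m}\mid g(x_{0})$ forcing $g(x_{0})=0$ is correct, the shift polynomials $x^{j}N^{m-i}f_{\beta}(x)^{i}$ and $x^{t}f_{\beta}(x)^{m}$ do all vanish modulo $b^{m}$ at $x_{0}$, the triangular basis makes the determinant explicit, and the LLL bound $2^{(n-1)/4}(\det L)^{1/n}$ closes the argument. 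Two small bookkeeping corrections: the standard optimization takes $m \approx \beta^{2}/(\delta\epsilon)$ (not $\beta/(\delta\epsilon)$) and the extra shifts $s \approx \delta m(1/\beta - 1)$, giving lattice dimension $n \approx \delta m/\beta = O(\beta/\epsilon)$ rather than $O(\beta^{2}/(\delta\epsilon))$; neither slip affects the structure of the argument, only the constants absorbed into the $N^{-\epsilon}$ slack. If you wanted to make this fully rigorous you would also need to note that the divisor $b$ itself is unknown, so the Howgrave-Graham condition must be verified using only the lower bound $b \ge N^{\beta}$, which is exactly how you use it.
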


\begin{case}
We begin by observing $e_{A1}=pq$ where $p$ and $q$ are of equal
length. Suppose $p$ is prime integer that satisfies
$p>(pq)^{\beta}$. It is clear that $\beta=\frac{1}{2}$. Let us now
observe the polynomials $x-e_{A2}$ and $e_{A1}=pq$ which have a
small common root $v$ modulo $p$. By the polynomial
$f_{p}(x)=x^{2}-e_{A2}x+(pq)$ we have the parameter $\delta=2$.
The parameter $\frac{1}{2} N^{\frac{\beta^{2}}{\delta} -
\epsilon}$ is an $(\frac{n}{4})$-bit integer while the parameter
$v$ is a $0.8125n$-bit integer. Thus, the bound is much smaller
than the root.
\end{case}

\begin{case}
A more efficient method would be just to observe the polynomial
$f_{p}(x)=x-e_{A2}$. Hence, $\delta=1$. The parameter $\frac{1}{2}
N^{\frac{\beta^{2}}{\delta} - \epsilon}$ is an $(\frac{n}{2})$-bit
integer while the parameter $v$ is a $0.8125n$-bit integer. Thus,
the bound is still much smaller than the root.
\end{case}

\subsection{Gaussian heuristic}
We will look at the the lattice $L$ spanned by $(1, 0,
e1),(0,1,e2),(0,0,C)$. Observe that the vector $V=(k1,k2,-m)$ is
in $L$. If $V$ is short, then the LLL algorithm will be able to
detect $V$. This is critical since by the usage of the vector
$V=(k1,k2,-m)$ it is obvious that the length of m is dominant when
compared to k1 and k2 hence length of V is approximately m. And by
the above information m is certainly dominant in the vector
V=(k1,k2,-m). Now let us check whether V is really short or not.
The Gaussian heuristic for the lattice L is given by:
\begin{equation}
\sigma (L)=\sqrt{(\frac{3}{2 \pi e})}   C^{1/3}
\end{equation}
One can see that $\sigma (L)$ is approximately $(\frac{2n}{3})$-bits,
while the length of the vector $V$ is $(\frac{4n}{5})$-bits. The
Gaussian heuristic is much smaller than the length of the vector $V$.
Thus, the vector $V$ is not considered to be short and cannot be detected
by the LLL algorithm.

\section{Improper design via the DEHP}
It is important to note that, an improper design of an asymmetric
cryptosystem via the DEHP would lead to succesful passive
adversary attacks. To illustrate this fact, we will produce the
following two examples.

\subsection{A key exchange mechanism based on the DEHP}
Let Along and Busu utilize private 2 X 2 non-singular matrices
$\textbf{A}$ and $\textbf{B}$ respectively. A base generator
$\textbf{G}$ will be made public. It is a 2 X 2 singular matrix.
The parameter $\textbf{E}_{A}=\textbf{AG}$ and
$\textbf{E}_{B}=\textbf{GB}$ will be exchanged between Along and
Busu. Then Along will compute
$\textbf{EAB}=[\textbf{A}]\textbf{E}_{B}$, while Busu will compute
$\textbf{EBA}=\textbf{E}_{A}[\textbf{B}]$. Now both parties have
the same key (i.e. key exchange). If the assumption is that the
attacker has to obtain either $\textbf{A}$ or $\textbf{B}$ from
either $\textbf{E}_{A}$ or $\textbf{E}_{B}$ this would be the
DEHP, since $\textbf{G}$ is singular. However, an attacker could
still compute $\textbf{A}^{'} \neq \textbf{A}$ but
$\textbf{A}^{'}\textbf{G}=\textbf{AG}$ and as a result is able to
compute $\textbf{A}^{'}\textbf{E}_{B}=\textbf{EAB}$. Thus
rendering the scheme insecure. The following is a numerical
example.

\begin{example}
Let
\[
\textbf{G}  = \left( {\begin{array}{*{20}c}
   1 & 2   \\
   2 & 4  \\
\end{array}} \right),
\textbf{A} = \left( {\begin{array}{*{20}c}
   2 & 3  \\
   4 & 5  \\
\end{array}} \right),
\textbf{B} = \left( {\begin{array}{*{20}c}
   7 & 8  \\
   9 & 10  \\
\end{array}} \right)
\]
Along will generate
\[
\textbf{E}_{A} = \left( {\begin{array}{*{20}c}
   7 & 14  \\
   14 & 28  \\
\end{array}} \right)
\]
and Busu will generate
\[
\textbf{E}_{B} = \left( {\begin{array}{*{20}c}
   25 & 28  \\
   50 & 56  \\
\end{array}} \right)
\]
The shared key computed by both parties is
\[
\textbf{AGB} = \left( {\begin{array}{*{20}c}
   175 & 196  \\
   350 & 392  \\
\end{array}} \right)
\]
An attacker intercepting $\textbf{E}_{A}$ could construct the
matrix
\[
\textbf{A}^{'} = \left( {\begin{array}{*{20}c}
   7 & 0  \\
   14 & 0  \\
\end{array}} \right)
\]
It could be observed that
$\textbf{AGB}=\textbf{A}^{'}\textbf{GB}$. Hence, a passive
adversary attack has been successfully executed.
\end{example}

\subsection{Improper integer size}
Observe the equation given by
\begin{equation}
e_{A}=a_{1}+a_{2}g_{1}
\end{equation}
where $e_{A}$ and $g_{1}$ are public parameters. Let $g_{1}$ be of
length $2n$-bits, while the private parameters $a_{1}$ and $a_{2}$
are $n$-bits long.Because of this improper choice of size, one can
obtain
\begin{equation}
a_{2}=floor(\frac{e_{A}}{g_{1}})
\end{equation}

\section{The Underlying Security Principle}
We will now observe the underlying security principles that the
$AA_\beta$-cryptosystem is based upon.

\subsection{The $AA_\beta$-DEHP-1}
Determine the preferred integer either  $(k_{1}$ or $k_{2})$ such
that $m=C-k_{1}e_{A1} (\textrm{mod } e_{A2})$ or  $m=C-k_{2}e_{A2}
(\textrm{mod } e_{A1})$.

\subsection{The $AA_\beta$-DEHP-2}
Determine the preferred integers $(a_{1}, a_{2}, a_{3})$ belonging
to the public keys $e_{A1}$ and $e_{A2}$.

\subsection{The integer factorization problem}
Let $p$ and $q$ be two large primes. From $e_{A1}=a_{1} +
a_{2}=pq$ obtain $d_{A1}=p$.

\subsection{Security reduction}
\begin{proposition}
$AA_\beta$-DEHP-2 $\equiv_{T}$ Factoring $e_{A1}=pq$.
\end{proposition}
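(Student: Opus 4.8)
The plan is to prove the Turing equivalence by giving two polynomial-time reductions, both of which rest on the single observation that the private difference $a_1-a_2=d_{A1}=p$ is a prime factor of the public modulus $e_{A1}=a_1+a_2=pq$. First I would record the elementary bridge supplied by Definitions 5 and 6: from $e_{A1}=a_1+a_2$ and $p=a_1-a_2$ one recovers
$$a_1=\frac{e_{A1}+p}{2},\qquad a_2=\frac{e_{A1}-p}{2},\qquad a_3=e_{A2}-a_1,$$
where the first two are integers because $p$ and $q$ are odd primes, so that $e_{A1}\pm p=p(q\pm1)$ is even. Hence knowledge of the single number $p$ is interchangeable with knowledge of the whole preferred triple $(a_1,a_2,a_3)$, and conversely.

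For the direction $AA_\beta$-DEHP-2 $\leq_T$ Factoring, I would feed the genuine public modulus $e_{A1}$ to the factoring oracle to obtain its two prime factors. Since it is a priori unclear which factor plays the role of $d_{A1}=p$, I would try both labellings: for each candidate $d\in\{p,q\}$ form $a_1,a_2,a_3$ by the formulas above and compute the induced $v=(a_2+a_3)\bmod d$, retaining the labelling for which $v$ has the prescribed $0.8125n$-bit length (Definition 4). This is only a constant amount of extra work and pins down the preferred triple, solving the $AA_\beta$-DEHP-2 instance.

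For the converse direction Factoring $\leq_T AA_\beta$-DEHP-2, I would present the number $N=pq$ to be factored as the first public key $e_{A1}=N$ and query the $AA_\beta$-DEHP-2 oracle; from its output $(a_1,a_2,a_3)$ I would return $a_1-a_2$ as a nontrivial factor of $N$, recovering $q=N/(a_1-a_2)$. The legitimacy of this step comes from Definition 4, equation (1): the requirement $a_1+a_2\equiv 0\ (\textrm{mod } a_1-a_2)$ forces $(a_1-a_2)\mid pq$, and the bit-length conventions ($a_1,a_2$ are $2n$-bit while $a_1-a_2$ is $n$-bit) leave $a_1-a_2\in\{p,q\}$, i.e. a genuine prime factor, whichever prime is designated $p$.

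The step I expect to be the main obstacle is making this second reduction fully rigorous, because a DEHP-2 instance nominally consists of the pair $(e_{A1},e_{A2})$, yet to manufacture an $e_{A2}=a_1+a_3$ consistent with the honest key generation I would already need $a_1$, hence $p$ --- which is circular. The way I would dispose of this is to argue that the $(a_1,a_2)$-component of any preferred solution is determined by $e_{A1}$ alone: as noted above, $a_1-a_2$ is pinned to a prime factor of $e_{A1}$ by equation (1) together with the size constraints, independently of $e_{A2}$. Consequently any syntactically admissible $e_{A2}$ (for instance $e_{A2}=e_{A1}$, which merely forces $a_3=a_2$) yields an instance whose solution still exposes $a_1-a_2$, and with it the factorization. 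Verifying that the oracle is indeed obliged to answer correctly on such crafted instances is the delicate point, and is where I would concentrate the bulk of the argument.
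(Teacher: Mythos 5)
Your proposal is correct and follows essentially the same route as the paper: both directions use the identical bridge $a_1=\frac{p(q+1)}{2}$, $a_2=\frac{p(q-1)}{2}$, $a_3=e_{A2}-a_1$ for the reduction to factoring, and $p=a_1-a_2$, $q=e_{A1}/p$ for the converse. The additional care you take (disambiguating which prime factor is $d_{A1}$, and the well-posedness of crafted instances fed to the DEHP-2 oracle) addresses genuine loose ends that the paper's two-line proof simply glosses over, but it does not change the underlying argument.
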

\begin{proof}
Let $\theta_{1}$ be an oracle that factors the product of primes.
Call $\theta_{1}(e_{A1})$ to obtain $p$ and $q$. Then we are able
to construct $a_{1}=\frac{p(q+1)}{2}$, $a_{2}=\frac{p(q-1)}{2}$
and $a_{3}=e_{A2}-a_{1}$. Hence, the preferred integers $(a_{1},
a_{2}, a_{3})$ are obtained Thus, $AA_\beta$-DEHP-2 $\leq_{T}$
Factoring $e_{A1}=pq$. Let $\theta_{2}$ be an oracle that obtains
the preferred integers $(a_{1}, a_{2}, a_{3})$. Then obtain
$p=a_{1}-a_{2}$ and $\frac{e_{A1}}{p}=q$. Thus, Factoring
$e_{A1}=pq \leq_{T}$ $AA_\beta$-DEHP-2. Hence, $AA_\beta$-DEHP-2
$\equiv_{T}$ Factoring $e_{A1}=pq$. $\Box$
\end{proof}

\begin{proposition}
Decryption $\leq_{T}$ Factoring $e_{A1}=pq$.
\end{proposition}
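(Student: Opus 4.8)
The plan is to establish the reduction by exhibiting an explicit decryption procedure that relies only on the ability to factor $e_{A1}=pq$. We are given an oracle $\theta$ that factors the product of two primes. The decryption target is to recover the plaintext $m$ from the ciphertext $C=k_1 e_{A1}+k_2 e_{A2}+m$, using only public information $(e_{A1},e_{A2},C)$. By Proposition 1, we already know that $m=(C \bmod d_{A1}) \bmod d_{A2}$, so decryption is trivial \emph{once the private keys} $d_{A1}=p$ and $d_{A2}=v$ are in hand. The strategy is therefore to show that a single call to the factoring oracle suffices to reconstruct both private keys.

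First I would call $\theta(e_{A1})$ to obtain the primes $p$ and $q$. This immediately yields $d_{A1}=a_1-a_2=p$. Next I would reconstruct the preferred integers exactly as in Proposition 2: set $a_1=\frac{p(q+1)}{2}$ and $a_2=\frac{p(q-1)}{2}$, which indeed satisfy $a_1+a_2=pq=e_{A1}$ and $a_1-a_2=p$, and then set $a_3=e_{A2}-a_1$. With all three private integers recovered, I would recover the second private key by computing $v=d_{A2}=a_2+a_3 \bmod (a_1-a_2)$, using the defining congruence (2). At this point both $d_{A1}$ and $d_{A2}$ are known.

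Finally I would run the legitimate decryption algorithm of Proposition 1: compute $C \bmod d_{A1}$ to strip away the $e_{A1}$ and $e_{A2}$ contributions (this works because $k_2 v+m<d_{A1}$, as invoked in the proof of Proposition 1), and then reduce modulo $d_{A2}$ to recover $m$ (valid since $m<d_{A2}$). Since every step after the oracle call is polynomial-time arithmetic, this exhibits a Turing reduction $\text{Decryption}\leq_T \text{Factoring } e_{A1}=pq$, which is the claim.

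The main obstacle I anticipate is not the reduction mechanics, which are essentially a restatement of Proposition 2 followed by Proposition 1, but rather verifying that the reconstructed keys satisfy the size conditions that make Proposition 1 applicable—namely that the recovered $v$ and the resulting $d_{A2}$ genuinely obey $k_2 v+m<d_{A1}$ and $m<d_{A2}$ for the parameters as sized in Definitions 5 through 8. I would want to confirm that the $a_1,a_2$ produced from $p,q$ match the bit-length constraints of Definition 5 (that they are $2n$-bit integers with $v$ of the prescribed $0.8125n$-bit length), so that the decryption identity holds for the reconstructed rather than merely the original keys. Note also that the reduction is one-directional here: unlike Proposition 2 we do not claim equivalence, only that factoring is sufficient for decryption.
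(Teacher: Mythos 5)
Your proposal is correct and takes essentially the same route as the paper: one call to the factoring oracle yields $p=d_{A1}$, from which $v=d_{A2}$ is recovered and Proposition 1's decryption is applied. The only difference is cosmetic — the paper obtains $v$ directly as $e_{A2}\bmod p$ (since $a_1\equiv a_2 \pmod{a_1-a_2}$ gives $e_{A2}=a_1+a_3\equiv a_2+a_3\equiv v$), whereas you detour through reconstructing $a_1,a_2,a_3$ as in Proposition 2; your formula $(a_2+a_3)\bmod(a_1-a_2)$ evaluates to the same $e_{A2}\bmod p$, so your worry about the reconstructed $a_i$ matching the original bit-lengths is moot — only $p$ and $v$ matter, and both are recovered exactly.
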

\begin{proof}
Let $\theta_{1}$ be an oracle that factors the product of primes.
Call $\theta_{1}(e_{A1})$ to obtain $p$ and $q$. Then determine
$v\equiv e_{A2} (\textrm { mod} p)$. Now, decryption can occur.$\Box$
\end{proof}

\subsection{Indistinguishability}
\begin{proposition}
The $AA_\beta$ public key cryptosystem is IND-CPA.
\end{proposition}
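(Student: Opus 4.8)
The plan is to establish IND-CPA through a game-hopping argument that isolates the masking role of the ephemeral keys $k_1, k_2$. First I would fix the standard experiment: the challenger runs the key generation of Definitions 4--6 and gives $(e_{A1}, e_{A2})$ to a probabilistic polynomial-time (PPT) adversary $\mathcal{A}$; the adversary returns two plaintexts $m_0, m_1$ of equal length $\frac{4n}{5}$ bits; the challenger samples $b \in \{0,1\}$ together with $\frac{n}{6}$-bit integers $k_1, k_2$ uniformly and replies with $C = k_1 e_{A1} + k_2 e_{A2} + m_b$; finally $\mathcal{A}$ outputs a guess $b'$, and its advantage is $\left| \Pr[b' = b] - \frac{1}{2} \right|$. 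I would stress at the outset that encryption is randomised through $(k_1, k_2)$, so the scheme is not disqualified on the deterministic grounds that rule out textbook RSA.

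The reduction then runs through two games. Game 0 is the real experiment, in which the mask is $r = k_1 e_{A1} + k_2 e_{A2}$. In Game 1 I replace $r$ by a value $\tilde{r}$ drawn uniformly from the integer interval $[0, W]$ with $W = 2^{n/6}(e_{A1} + e_{A2})$, which contains every honest mask, and set $C = \tilde{r} + m_b$. In Game 1 the ciphertext is uniform on $[m_b, W + m_b]$, so the two distributions induced by $m_0$ and $m_1$ have total variation at most $|m_1 - m_0| / (W + 1)$; since $|m_1 - m_0| < 2^{4n/5}$ while $W \approx 2^{13n/6}$, this is $2^{-\Theta(n)}$, and hence every adversary has negligible advantage in Game 1. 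It therefore remains only to bound the gap $\left| \Pr[\mathcal{A} \text{ wins Game } 0] - \Pr[\mathcal{A} \text{ wins Game } 1] \right|$.

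That gap is precisely a decisional form of the $AA_\beta$-DEHP-1 of Subsection 7.1: it asks whether the honest mask $r = k_1 e_{A1} + k_2 e_{A2}$, which is supported on only $2^{n/3}$ points, can be told apart from a uniform element of the enclosing interval. This is the crux and the expected main obstacle, for two reasons. The honest mask carries only $\frac{n}{3}$ bits of entropy yet must conceal a $\frac{4n}{5}$-bit plaintext, so no statistical argument is available and one is forced onto a computational assumption; moreover the equivalences already in hand bound only search problems --- Proposition 2 ties $AA_\beta$-DEHP-2 to factoring and Proposition 3 ties decryption to factoring --- whereas IND-CPA requires the decisional $AA_\beta$-DEHP-1, which is formally stronger and is not delivered by those reductions. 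I would therefore adopt the decisional $AA_\beta$-DEHP-1 as the standing hardness assumption, reduce any Game-0/Game-1 distinguisher to it in the natural way, and discharge its plausibility with the lattice analysis of Section 5: the honest mask is exactly the target vector $V = (k_1, k_2, -m)$ of the lattice $L$ in Subsection 5.2, and since the Gaussian-heuristic length $\sigma(L) \approx C^{1/3}$ of about $\frac{2n}{3}$ bits lies below the $\frac{4n}{5}$-bit length of $V$, LLL cannot recover it, while the two Coppersmith cases likewise leave the hidden parameters out of reach. Chaining the two hops then yields that every PPT adversary has IND-CPA advantage negligible in $n$.
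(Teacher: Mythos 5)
Your route is genuinely different from the paper's. The paper's entire proof of this proposition is one line: the scheme is probabilistic, ``a probabilistic encryption scheme is IND-CPA,'' hence done. That implication is false as a general principle --- randomising the encryption is necessary for IND-CPA but nowhere near sufficient (a scheme can be probabilistic and still leak the parity of the plaintext, say) --- so the paper's argument does not actually establish the proposition. Your game-hopping proof is the methodologically correct way to attack the claim: you set up the IND-CPA experiment properly, you isolate the mask $r=k_{1}e_{A1}+k_{2}e_{A2}$, your Game-1 statistical bound $|m_{1}-m_{0}|/(W+1)\approx 2^{4n/5}/2^{13n/6}$ is sound, and you correctly observe that the Game-0/Game-1 hop cannot be closed statistically because the mask carries only $\frac{n}{3}$ bits of entropy against a $\frac{4n}{5}$-bit plaintext. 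What your approach buys is an honest conditional theorem: IND-CPA holds if and only if one assumes hardness of a \emph{decisional} variant of the $AA_{\beta}$-DEHP-1, an assumption the paper never formulates (its Propositions 2 and 3 only treat search problems and reduce them to factoring, which, as you note, does not imply the decisional statement). What it costs is that the proposition is no longer unconditional; the lattice discussion of Section 5 is only heuristic evidence for the new assumption, not a proof. In short, your proof identifies precisely the assumption under which the proposition is true, whereas the paper's proof asserts the proposition without identifying any assumption at all; if you present your version, state the decisional $AA_{\beta}$-DEHP-1 assumption explicitly as a definition rather than leaving it implicit in the reduction.
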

\begin{proof}
The $AA_\beta$ public key cryptosystem is a probabilistic
cryptosystem. A probabilitic encryption scheme is IND-CPA
\cite{16}. Thus the $AA_\beta$ public key cryptosystem is IND-CPA.
$\Box$
\end{proof}

\subsection{Example}
We will now provide a clear numerical illustration of the
$AA_\beta$-cryptosystem for $n=32$-bits. Along will generate the
following secret keys: $a_{1}=6143959510671614040,$
$a_{2}=6143959507200090613$, $a_{3}=5113460585870913605$ and $v=66857602$.
Along's public keys are $e_{A1}=12287919017871704653$ and
$e_{A2}=11257420096542527645$. Observe that $e_{A1}$ is product of
two 32-bit primes ($p=3471523427$ and $q=3539633039$). Along's
private keys are $d_{A1}=3471523427$ and $d_{A2}=66857602$. In the
meantime Busu will generate $k_{1}=33$ and $k_{2}=32$. The message is
$M=39152991$. The ciphertext generated
by Busu is $C=765738770679166291180$. Finally,
$(C (\textrm{mod } d_{A1}))(\textrm{mod } d_{A2})=39152991$.$\square$

\section{Conclusion}
The $AA_\beta$-cryptosystem has the capacity to become a novel
public key cryptosystem whose hard mathematical problem is based
upon the difficulty of the DEHP and the integer factorization
problem of two large primes. Just like the RSA, where the $e$-th
root problem is considered much more difficult than factoring the
product of primes, the DEHP could also be considered much more
difficult than factoring the product of primes (due to the
exponential number of possibilities for the private parameters).
The minimum key length for optimum security should be set to
$n=512$-bits. On another note, it is known that the implementation
of RSA and ECC is $O(n^3)$ operations where $n$ is the length of
the message block \cite{5},\cite{8},\cite{17}. By this fact we can
have the following table of comparison.
\\

\begin{center}
    \begin{tabular}{|c|c|c|c|}
        \hline
        Algorithm & Encryption Speed & Decryption Speed & Expansion  \\ \hline
        RSA   & $O(n^2)$ & $O(n^3)$ & 1 - 1\\ \hline
        ECC  & $O(n^3)$ & $O(n^3)$ & 1 - 2 (2 parameter ciphertext)\\ \hline
        NTRU & $O(n^2)$ & $O(n^2)$ & varies\\ \hline
        $AA_\beta$ & $O(n^2)$ & $O(n^2)$ & 1 - 2.7\\ \hline
    \end{tabular}
\newline \\ \textbf{Table 2} \text{Encryption / decryption speed and message expansion table for
message block of length $n$}
\end{center}


One can also note another advantage. That is, since encrypt and
decrypt procedures are the basic arithmetic operation of
multiplication, the scheme could encrypt messages of large block
size with ease. As a result this algorithm is advantageous
relative to RSA or ECC (because of better speed) and ECC (because
of less computational effort to encrypt/decrypt messages of large
block size).

\section*{Acknowledgments}The authors would like to thank Yanbin Pan
of Key Laboratory of Mathematics Mechanization Academy of
Mathematics and Systems Science, Chinese Academy of Sciences
Beijing, China and Gu Chunsheng of School of Computer Engineering,
Jiangsu Teachers University of Technology, Jiangsu Province, China
for valuable comments and discussion.

\end{document}